 \documentclass[conference,10pt]{IEEEtran}        
 \usepackage[T1]{fontenc}% optional T1 font encoding
 \usepackage[
top    = 0.7 in,
bottom = 1.1 in,
left   = 0.625 in,
right  = 0.625 in]{geometry}

%\usepackage[bottom=0.9in]{geometry}
% if you need a4paper
%\documentclass[a4paper, 10pt, conference]{ieeeconf}      % Use this line for a4

% -------------------------------------------------------------
% Packages
% -------------------------------------------------------------
\usepackage[usenames,dvipsnames]{xcolor}
\usepackage{amsfonts}
\usepackage[sort]{cite}
\usepackage{amsmath,amsthm} 
\usepackage{amssymb}  
\usepackage{graphicx}
\usepackage{epsfig}
\usepackage{epstopdf}
\usepackage{mathtools}
\usepackage{etoolbox}
\usepackage{lipsum}
\usepackage{caption}
\usepackage{comment}
\usepackage{soul}
\usepackage{verbatim}
\usepackage{subfig} % for subfigures
\usepackage{algorithm,algorithmic}

  % For IID
  % For covariance
  % For vector
  % For diagonal matrices
  % For trace

  % For the non-causal distributions
  % For the nonanticipative notation
  % For consistency condition
  % For AWGN
  % For NRDF
  % For RDF
  % For MSE
  % For RVs
  % For RV
  % For tot
  % For KKT

%\theoremstyle{definition}

%structure declarations
%\renewcommand{\thefigure}{\thesection.\arabic{figure}}
%\renewcommand{\thetable}{\thesection.\arabic{table}}
%\renewcommand{\theequation}{\thesection.\arabic{equation}}
%\numberwithin{equation}{section}%sets equation numbers <chapter>.<section>.<index>

% -------------------------------------------------------------
% Bookmarks in .pdf
% -------------------------------------------------------------
% \hypersetup{
%      colorlinks = true,
%      linkcolor = [rgb]{0,0,1},
%      anchorcolor = [rgb]{0,0,1},
%      citecolor = [rgb]{0.9,0.5,0},
%      filecolor = [rgb]{0,0,1},
%      pagecolor = [rgb]{1,1,0},
%      urlcolor = [rgb]{0.9,0.5,0},
%      bookmarks,
%         bookmarksopen = true,
%         bookmarksnumbered = true,
%         breaklinks = true,
%         linktocpage,
%         pagebackref,
%         colorlinks = true,
%         linkcolor = [rgb]{0.2,0.6,0.2},
%         urlcolor  = [rgb]{0,0,1},
%         citecolor = [rgb]{0.9,0.5,0},
%         anchorcolor = [rgb]{0.2,0.6,0.2},
%         hyperindex = true,
%         hyperfigures
% }
% -------------------------------------------------------------

\newcommand{\T}{^{\mbox{\tiny T}}}

\newtheorem{theorem}{\bf Theorem}%[section]

\newtheorem{remark}{\bf Remark}
\newtheorem{problem}{\bf Problem}
\newtheorem{assumption}{\bf Assumption}

\newtheorem{proposition}{\bf Proposition}

%\newtheorem{proof}{Proof}
%\newtheorem{conjecture}{Conjecture}

% A R R O W S Y M B O L S

%\newcommand{\utri}{\sr{\nabla}{=}}

% M I S C E L L A N E O U S S Y M P O L S

% E Q U A T I O N S
\newcommand{\be}{\begin{equation}}
\newcommand{\ee}{\end{equation}}
\newcommand{\bea}{\begin{eqnarray}}
\newcommand{\eea}{\end{eqnarray}}
\newcommand{\bes}{\begin{eqnarray*}}
\newcommand{\ees}{\end{eqnarray*}}
\newcommand{\ba}{\begin{align}}
\newcommand{\ea}{\end{align}}
       % For total

% F I G U R E S
\newcommand{\bfi}{\begin{figure}}
\newcommand{\bfit}{\begin{figure}[t]}
\newcommand{\bfib}{\begin{figure}[b]}
\newcommand{\bfih}{\begin{figure}[h]}
\newcommand{\bfip}{\begin{figure}[p]}
\newcommand{\efi}{\end{figure}}

% I T E M I Z E E N U M E R A T E
\newcommand{\bi}{\begin{itemize}}
\newcommand{\ei}{\end{itemize}}
\newcommand{\ben}{\begin{enumerate}}
\newcommand{\een}{\end{enumerate}}

% S P A C I N G S
\newcommand{\bp}{\begin{problem}}
\newcommand{\ep}{\end{problem}}

\newenvironment{list4}{
  \begin{list}{$\bullet$}{
      \setlength{\itemsep}{0.05cm}
      \setlength{\labelsep}{0.2cm}
      \setlength{\labelwidth}{0.3cm}
      \setlength{\parsep}{0in}
      \setlength{\parskip}{0in}
      \setlength{\topsep}{0in}
      \setlength{\partopsep}{0in}
      \setlength{\leftmargin}{0.17in}}}
      {\end{list}}

% -------------------------------------------------------------
% Commenting
% -------------------------------------------------------------

 % emphasis
 % revision

\IEEEoverridecommandlockouts
\ifCLASSINFOpdf
\else
\fi
\interdisplaylinepenalty=2500
\usepackage[cmintegrals]{newtxmath}
\hyphenation{op-tical net-works semi-conduc-tor}
\addtolength{\topmargin}{.15in}
% ==================================================
%
%
% BEGIN DOCUMENT
%
%
% ==================================================
\begin{document}

%\title{\textbf{Composite objective optimization: Novel estimating functions and algorithm}}

%\title{\bf \fontsize{19.5}{22.5}\selectfont Distributed Average Consensus in Wireless Multi-Agent Systems with Over-the-Air Aggregation}

\title{Distributed Average Consensus in Wireless Multi-Agent Systems with Over-the-Air Aggregation}

\author{
    \IEEEauthorblockN{Themistoklis Charalambous\IEEEauthorrefmark{1}, Zheng Chen\IEEEauthorrefmark{2}, and Christoforos N. Hadjicostis\IEEEauthorrefmark{1}} \thanks{The work of T. Charalambous is partly funded by MINERVA, which received funding from the European Research Council (ERC) under the European Union's Horizon 2022 research and innovation programme (Grant Agreement No. 101044629).}
    \IEEEauthorblockA{\IEEEauthorrefmark{1}Department of Electrical and Computer Engineering, University of Cyprus, Nicosia, Cyprus
    \\Emails: \{charalambous.themistoklis, hadjicostis.christoforos\}@ucy.ac.cy}
    \IEEEauthorblockA{\IEEEauthorrefmark{2}Department of Electrical Engineering, Link\"{o}ping University, Link\"{o}ping, Sweden
    \\Email: zheng.chen@liu.se}
}
% \author{Themistoklis Charalambous, Zheng Chen, and Christoforos N. Hadjicostis
% \thanks{T. Charalambous and C.~N. Hadjicostis are with the Department of Electrical and Computer Engineering, University of Cyprus, Nicosia, Cyprus. 
% Emails: {\tt \{charalambous.themistoklis,chadjic\}@ucy.ac.cy}. Themistoklis Charalambous is also a Visiting Professor at the Department of Electrical Engineering and Automation, School of Electrical Engineering, Aalto University. Email: {\tt themistoklis.charalambous@aalto.fi}.
% }
% \thanks{Zheng Chen is with the Department of Electrical Engineering, Link\"{o}ping University, Sweden. Email: {\tt zheng.chen@liu.se}.}

% }

\maketitle
% ==================================================
%
%
% ABSTRACT
%
%
% ==================================================
\begin{abstract}
In this paper, we address the average consensus problem of multi-agent systems over wireless networks. We propose a distributed average consensus algorithm by invoking the concept of over-the-air aggregation, which exploits the signal superposition property of wireless multiple-access channels. The proposed algorithm deploys a modified version of the well-known Ratio Consensus algorithm with an additional normalization step for compensating for the arbitrary channel coefficients. We show that, when the noise level at the receivers is negligible, the algorithm converges asymptotically to the average for time-invariant and time-varying channels. Numerical simulations corroborate the validity of our results.
\end{abstract}

% ==================================================
%
%
% ABSTRACT
%
%
% ==================================================
\section{Introduction}\label{sec:intro}

The unprecedented interconnection of multi-agent systems we experience today necessitates efficient network management. 
%Distributed coordination among multiple agents has been of paramount importance due to the wide range of current and foreseeable applications, ranging from sensor networks~\cite{GUYEUX2020115} to smart grids~\cite{9272997}, and social networks~\cite{2023:Fontan_Altafini_Signed}.
One problem of particular interest over multi-agent systems is reaching \textit{consensus} in a distributed fashion, i.e., reaching agreement among distributed agents towards a common decision. 
%Among others, applications of consensus include cooperative control of unmanned aerial vehicles, rendez-vous problems, or formation control of robots/vehicles, where different agents coordinate to follow a common direction or to reach a common target location. 
\textit{Average consensus} constitutes a special case of consensus in which agents are initially endowed with numerical states and aim to calculate the average among these initial states. 
%Average consensus serves as the basis for many engineering applications,
%for example, given a network of sensors with noisy measurements of the same quantity, sensors aim to obtain the average of their measurements in order to have a more accurate estimate of the measured quantity.
%
In classical consensus approaches, each node acquires separately the states of all the neighboring agents, whereas only a function (e.g., a linear combination) of the states is needed~\cite{2003:jadbabaie_coordination}. %\cite{1984:Tsitsiklis,2003:jadbabaie_coordination,2003:Xiao_fastlinear,cai2012average}. 

Acquiring the individual states over wireless networks increases the communication overhead and cost considerably and, in addition, requires advanced coordination in order to avoid interference and packet collisions. 
A key characteristic of the wireless channel is signal superposition, meaning that a receiving node can receive signals from multiple transmitting nodes simultaneously. In traditional wireless communication systems, concurrent transmissions from multiple sources are usually undesired due to the negative effect of interference on the decodability of the intended signal. On the other hand, signal superposition in multiple access channels (MACs) can also be exploited to perform function computation efficiently ~\cite{nazer2007computation}.
Inspired by this concept, over-the-air aggregation has recently attracted wide attention in distributed settings where the goal is to aggregate information from a large number of source nodes.
%thanks to its ability of aggregating multiple transmissions directly in the air by exploiting the superposition property of wireless channels.
%because it exploits interference and enhances network capacity. Over-the-air aggregation exploits the superposition property of wireless channels to aggregate information. 
Specifically, all nodes operate in the same frequency-time blocks and transmit analog signals carrying the information of their states, thus forming a waveform at the receiver side that corresponds to the aggregated information.%that includes the information carried by the transmitted signals. 
%Given its advantages, OAA has been applied to Internet of Things~\cite{WirelessMagazine_Zhu:2021}, vehicle platooning~\cite{ICC_Lee:2023}, wireless blockchain networks~\cite{TMC:2024}, and mission-critical applications~\cite{Cai:2018}, to name a few.
 
%\todo{Over-the-air computation (AirComp) refers to the computation of mathematical functions by exploiting the signal superposition property of wireless multiple-access channels \cite{Alphan_Survey:2023}.}

%\todo{ AirComp seamlessly integrates the communication and computation procedures through the superposition property of multiple-access channels, which yields a revolutionary multiple-access paradigm shift from ``compute-after-communicate'' to ``compute-when-communicate'' \cite{2022:AirComp-Foundations}.}

%through the exploitation of the co-existence of signals by leveraging the superposition property.

There are limited existing works in the literature considering distributed consensus with over-the-air aggregation as the communication approach. Some earlier works have explored this direction without considering the fading effect~\cite{Boche:2012WiOpt, 2012:WCNC}. 
In~\cite{2021:Max-Consensus,2022:Molinari_Max-Consensus}, over-the-air aggregation over fading channels 
%(channel coefficients with no constraints apart from positivity) 
for \emph{$\max$-consensus} is considered.\footnote{The $\max$-consensus algorithm is a distributed algorithm for computing the maximum value among the initial states of agents in multi-agent systems.} Furthermore, in~\cite{2018:Molinari}, a \emph{weighted} average consensus protocol is proposed under the same conditions. Even though consensus is reached, the protocol does not necessarily reach the average. Recently, a stochastic approximation-based (diminishing step size) distributed consensus protocol has been proposed in~\cite{yang2024distributed}, which guarantees convergence to the average in the mean square sense, for the case with noisy channels and complex-valued channel coefficients without phase alignment. However, the channel and noise statistics are assumed to be known in these works. A similar method with non-coherent processing was proposed in~\cite{michelusi2022non} for the distributed optimization setting.

In this paper, we propose the Over-the-Air Ratio Consensus algorithm with which we can achieve the exact average consensus in both time-invariant and time-varying channels, when the noise level at the receivers is negligible~\cite{2021:Max-Consensus,2022:Molinari_Max-Consensus,2018:Molinari}, but without requiring any knowledge about the channels. Our proposed method relies on the inherent and fundamental property of channel reciprocity~\cite{2004:Reciprocity_Smith} and employs a modified version of the well-celebrated Ratio Consensus algorithm~\cite{2010:christoforos}.
%in which there is an extra normalization step to account for the arbitrary (positive) values of the channel coefficients.

% ==================================================
%
%

% ==================================================
\section{Network Model and Preliminaries}\label{sec:prelim}
%\section{Network Model and Preliminaries}\label{sec:prelim}
 \subsection{Notation}

 The set of real (integer) numbers is denoted by $\mathbb{R}$ ($\mathbb{Z}$) and the set of nonnegative numbers (integers) is denoted by $\mathbb{R}_{+}$ ($\mathbb{Z}_{+}$).  $\mathbb{R}^n_+$ denotes the nonnegative orthant of the $n$-dimensional real space $\mathbb{R}^n$. Vectors are denoted by small letters whereas matrices are denoted by capital letters. The transpose of a matrix $A$ is denoted by $A\T$. For $A\in \mathbb{R}^{n\times n}$, $A_{ij}$ denotes the entry at row $i$ and column $j$.  
 By $\mathbf{1}$, we denote the all-ones vector, and by $I$ we denote the identity matrix (of appropriate dimensions).

\subsection{Network model}

We consider a wireless network consisting of $n\in\mathbb{N}$ spatially distributed nodes,  captured by a \emph{directed} graph $\mathcal{G}=(\mathcal{N}, \mathcal{E})$, where $\mathcal{N}=\{v_1, \cdots, v_n\}$ is the set of nodes (representing the $n$ agents) and $\mathcal{E} \subseteq \mathcal{N} \times \mathcal{N}$ is the set of edges (representing the communication links between agents). The total number of edges in the network is denoted by $m=|\mathcal{E}|$. A directed edge $\varepsilon_{ji} \triangleq (v_j, v_i) \in \mathcal{E}$, where $v_j, v_i \in \mathcal{N}$, indicates that node $v_j$ can receive information from node $v_i$, i.e., $v_i \rightarrow v_j$. The nodes that transmit information to node $v_j$ directly are called in-neighbors of node $v_j$, and belong to the set $\mathcal{N}_{j}^{-}=\{v_i \in \mathcal{N} | \varepsilon_{ji} \in \mathcal{E}\}$. The number of nodes in the in-neighborhood is called in-degree and it is represented by the cardinality of the set of in-neighbors, $d_{j}^{-} = |\mathcal{N}_{j}^{-}|$. The nodes that receive information from node $v_j$ directly are called out-neighbors of node $v_j$, and belong to the set $\mathcal{N}_{j}^{+}=\{v_l \in \mathcal{N} | \varepsilon_{lj} \in \mathcal{E}\}$. The number of nodes in the out-neighborhood is called out-degree and it is represented by the cardinality of the set of out-neighbors, $d_{j}^{+}= |\mathcal{N}_{j}^{+}|$. Note that when self-loops are included in digraph $\mathcal{G}$, the number of in-coming links of node $v_j$ is ($d_j^- +1$) and similarly the number of its out-going links is ($d_j^+ +1$).
%\begin{remark}
%While other works use undirected graphs for modeling the network topology, in this work we study directed graphs, which include undirected graphs as a special case. This allows us to consider practical cases, such as when there is an imbalance between the transmit power of different agents.
%which allows the consideration of practical cases in which there are directional antennas (as long as the reciprocity principle holds) or when we consider half-duplex transceivers and in each time step different agents transmit/receive. 
%\end{remark}

\subsection{Ratio Consensus}

In \cite{2010:christoforos}, an algorithm is suggested that solves the average consensus problem in a directed graph in which each node $v_j$ distributively sets the weights on its self-link and outgoing-links to be $p_{lj}=\frac{1}{1+d_j^+} \ \forall (v_l,v_j)\in\mathcal{E}$, so that the resulting weight matrix $P=[p_{lj}]$ (with $P_{lj}=p_{lj}$) is column stochastic, but not necessarily row stochastic. Asymptotic average consensus is reached by using this weight matrix to run two iterations with appropriately chosen initial conditions. 

% ---------------------------------------------------------------------------------
% Proposition
% ---------------------------------------------------------------------------------
\begin{proposition}[\hspace{-0.001cm}\cite{2010:christoforos}]~\label{lemma_christoforos}
Consider a strongly connected digraph $\mathcal{G}(\mathcal{N}, \mathcal{E})$. Let $y_j[k]$ and $x_j[k]$ (for all $v_j \in \mathcal{N}$ and $k=0,1,2,\ldots$) be the result of the iterations
\begin{subequations}\label{eq:1}
\begin{align}
y_j[k+1]=p_{jj} y_j[k]+ \sum_{v_i \in \mathcal{N}^{-}_j} p_{ji} y_i[k] \; , \label{subeq:1} \\
x_j[k+1]=p_{jj} x_j[k]+ \sum_{v_i \in \mathcal{N}^{-}_j} p_{ji} x_i[k] \; , \label{subeq:2}
\end{align}
\end{subequations}
where $p_{lj} = \frac{1}{1 + d_j^+}$ for $v_l \in \mathcal{N}_j^+ \cup \{ v_j \}$ (zero otherwise), and the initial conditions are ${y_j[0] =S_j}\in \mathbb{R}$ and  $x_j[0]=1$, for $v_j\in\mathcal{N}$. Then, the solution to the average consensus problem can be asymptotically obtained as
$$
\displaystyle \lim_{k\rightarrow \infty} \mu_j[k]=\frac{\sum_{v_j\in \mathcal{N}} y_j[0]}{\sum_{v_j\in \mathcal{N}} x_j[0]} {=\frac{\sum_{v_j\in \mathcal{N}} S_j}{|\mathcal{N}|}} \; , \forall v_j \in \mathcal{N} \; ,
$$
where
$
\displaystyle \mu_j[k]=\frac{y_j[k]}{x_j[k]} \; .
$   \hfill $\lrcorner$
\end{proposition}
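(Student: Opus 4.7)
The plan is to exploit the fact that the same column-stochastic matrix $P$ drives both the $y$- and $x$-iterations, so that the sums $\sum_j y_j[k]$ and $\sum_j x_j[k]$ are conserved, while the componentwise limits of $y[k]$ and $x[k]$ are both proportional to the Perron right-eigenvector of $P$. The proportionality cancels in the ratio $y_j[k]/x_j[k]$, leaving exactly the desired average.

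First, I would verify that $P$ is column stochastic: by construction, $p_{lj}=1/(1+d_j^+)$ for $v_l\in\mathcal{N}_j^+\cup\{v_j\}$ and $0$ otherwise, so the entries of column $j$ sum to $(1+d_j^+)\cdot\tfrac{1}{1+d_j^+}=1$. Consequently $\mathbf{1}\T P=\mathbf{1}\T$, which immediately yields the conservation laws
\begin{equation*}
\sum_{v_j\in\mathcal{N}} y_j[k]=\sum_{v_j\in\mathcal{N}} S_j,\qquad \sum_{v_j\in\mathcal{N}} x_j[k]=|\mathcal{N}|,\quad\forall k\ge 0.
\end{equation*}

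Next, I would argue that $P$ is primitive. Strong connectivity of $\mathcal{G}$ makes $P$ irreducible, and the mandatory self-loops ensure $p_{jj}>0$ for every $j$, so $P$ is aperiodic and hence primitive. By the Perron–Frobenius theorem, $1$ is a simple eigenvalue of $P$ that dominates all others in modulus. Since $P\T$ is row stochastic with Perron left-eigenvector $\mathbf{1}$, the corresponding right-eigenvector $\pi\in\mathbb{R}^n_+$ of $P$ (i.e., $P\pi=\pi$) can be normalized so that $\mathbf{1}\T\pi=1$, and
\begin{equation*}
\lim_{k\to\infty}P^k=\pi\,\mathbf{1}\T,
\end{equation*}
with $\pi_j>0$ for all $j$.

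Applying this limit to the two iterations $y[k]=P^k y[0]$ and $x[k]=P^k x[0]$ (with $y[0]=(S_1,\ldots,S_n)\T$ and $x[0]=\mathbf{1}$) gives
\begin{equation*}
\lim_{k\to\infty} y_j[k]=\pi_j\sum_{v_i\in\mathcal{N}}S_i,\qquad \lim_{k\to\infty} x_j[k]=\pi_j\,|\mathcal{N}|.
\end{equation*}
Since $\pi_j>0$, the ratio is well-defined for all sufficiently large $k$, and
\begin{equation*}
\lim_{k\to\infty}\mu_j[k]=\lim_{k\to\infty}\frac{y_j[k]}{x_j[k]}=\frac{\pi_j\sum_{v_i\in\mathcal{N}}S_i}{\pi_j\,|\mathcal{N}|}=\frac{\sum_{v_i\in\mathcal{N}}S_i}{|\mathcal{N}|},
\end{equation*}
which is the claim.

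The main obstacle I anticipate is justifying primitivity and the limit $P^k\to\pi\mathbf{1}\T$ rigorously for a \emph{non-doubly-stochastic} matrix; this is where one must be careful, because the stationary distribution lives on the right rather than the left, and $\pi$ is generally not uniform. Once primitivity and the direction of the Perron eigenvector are correctly identified, however, everything else reduces to linear-algebraic bookkeeping, and the cancellation of the unknown weights $\pi_j$ in the final ratio is what makes the algorithm succeed without any global knowledge of $\pi$.
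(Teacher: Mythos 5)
Your proof is correct and follows essentially the same route the paper uses (the proposition itself is only cited from the reference, but the paper's proof of Theorem~\ref{theorem:1} for the over-the-air variant uses exactly this argument: column stochasticity, primitivity from strong connectivity, the limit $P^k\to\pi\mathbf{1}\T$ onto the Perron right-eigenvector, and cancellation of $\pi_j$ in the ratio). Your explicit verification of the conservation laws and of $\pi_j>0$ via Perron--Frobenius fills in details the paper leaves implicit, but the underlying mechanism is identical.
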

\begin{remark}
Proposition~\ref{lemma_christoforos} states a distributed algorithm with which the exact average is \emph{asymptotically} reached, even if the directed graph is not balanced.  \hfill $\lrcorner$
\end{remark}

\begin{remark}
The algorithm described in Proposition~\ref{lemma_christoforos} is for a specific choice of weights on each link that assumes that each node knows its out-degree. Note, however, that the algorithm works for any set of weights that adhere to the graph structure and form a primitive column stochastic weight matrix.     \hfill $\lrcorner$
\end{remark}

\subsection{Communication Mechanism}

At every iteration of the ratio consensus algorithm, each node needs to send its state information to its out-neighbors and receive information from its in-neighbors. To facilitate the communication procedure, we make the following assumption.

%We consider that the pre-processed data symbols from $K$ devices are transmitted and aggregated over a wireless multiple-access fading channel. %Before presenting the channel model, we make the following assumptions.

\begin{assumption}\label{assumption:4}
The agents are equipped with full-duplex transceivers, such that they can transmit and receive simultaneously. Perfect self-interference cancellation is also assumed so that the transmitted signal from an agent will not affect the received signals from its neighbors.
\end{assumption}

\begin{remark}
Assumption~\ref{assumption:4} can be relaxed if either each time step is divided into two-time slots (one for receiving and one for transmitting) or each agent decides whether to transmit or receive at a given time step (randomly or deterministically). 
\end{remark}

\subsection{Channel model}
For transmitting data over wireless links, random variation of signal attenuation is unavoidable, which is captured by the fading coefficients.
Let $s_i[k]$ denote the transmitted data symbol from source node $v_i$ in the $k$-th iteration. The received signal at each node 
%By describing the communication with the standard affine model of a wireless MAC, the real-valued signal received at node 
$v_i \in \mathcal{N}$ is given by
\begin{align}
    r_i[k] = \sum_{v_j \in  \mathcal{N}_i^{-}[k]}h_{ij}[k] s_j[k] + \nu_i[k],
    \label{eq:received-signal}
\end{align}
where  $h_{ij}[k]\sim \mathcal{N}(0,\Sigma_{ij})$ denotes the channel coefficient between agent $v_i$ and $v_j$, and $\nu_i[k]\sim \mathcal{N}(0,\Sigma_{i})$ represents the additive white Gaussian noise.

\begin{assumption}\label{assumption:1}
%There is phase synchronization between the transmitters.
We assume that the communication channel coefficients are modeled as positive real-valued random numbers.
\end{assumption}
%In real-world communication systems, I/Q modulation is typically used. The modulated signals can be modeled as complex-valued symbols in the baseband representation. When transmitting the modulated signals over a communication channel, the phase of the received complex-valued signals will be randomly rotated due to random delays in the channel. To avoid the issue of uncontrollable phase rotation, we assume that the modulated signal contains only in-phase part, e.g., with amplitude shift keying. Then, the fading coefficient $h_{ij}$ can be viewed as a real-valued random number. A similar assumption is present in \cite{2021:Max-Consensus}.
In communication systems, channel coefficient is typically modeled as a complex number, reflecting the effect of the channel on the signal in terms of both amplitude attenuation and phase shift. In this work, we assume that the fading coefficient  $h_{ij}$ is represented as a positive real-valued random variable. This assumption is valid for specific types of modulation schemes where phase information is not essential, such as amplitude modulation with envelope detection. A similar assumption is present in \cite{2021:Max-Consensus}.

%\textcolor{blue}{COMMENT: Here, we assume real-valued $h_{ij}$. Later on, however, we assume that $h_{ij}>0$ (so that $\overline{H}$ is column stochastic. Is there any way we can assume/motivate why $h_{ij}$ is nonnegative?}

\begin{assumption}\label{assumption:2}
%We consider the scenario with a very high signal-to-noise (SNR) ratio. 
The noise level is assumed to be negligible.
\end{assumption}
In practice, noise is an unavoidable feature in communication systems. In this study, we start from the simplest case where noise is negligible as compared to the signal. The effect of noise will be investigated in an extended version of this paper.

\begin{assumption}\label{assumption:3}
%There is phase synchronization between the transmitters.
Channel reciprocity holds in the entire network, i.e., $h_{ij}=h_{ji}, \forall (v_i,v_j) \in\mathcal{E}$.
\end{assumption}
The reciprocity principle is based on the property that electromagnetic waves traveling between two antenna locations will undergo the same physical perturbations (e.g., reflection, refraction, diffraction) in both directions. Hence, if the link operates on the same frequency band in both directions, the impulse response of the channel observed between any two antennas should be the same regardless of the direction \cite{2004:Reciprocity_Smith}.

% ==================================================
%
%
% MAIN RESULTS
%
%
% ==================================================
\section{Over-the-Air Ratio Consensus}

In this section, we propose the \textit{Over-the-Air Ratio Consensus} algorithm, under which the multi-agent system can achieve the average asymptotically. We consider two variants of the same algorithm: one with time-invariant channels (i.e., the channel state remains the same throughout the operation of the algorithm) and another one with time-varying channels (i.e., the channels may vary at every time step).

\subsection{Time-invariant channels}

%The main operations of the Over-the-Air Ratio Consensus algorithm in the case of time-invariant channels (TIC), herein called the TIC-Over-the-Air Ratio Consensus algorithm, are summed up in the following steps:
For the case with the time-invariant channel, the main operation steps of the algorithm (herein called the TIC-Over-the-Air Ratio Consensus) are described as follows. 
\begin{list4}
    \item[1.] Each agent $v_j$ maintains an initial state $\tilde{y}_j[0]=S_j$. %(determined by the information on which the network of nodes must reach average consensus). 
    The initial value of the auxiliary variable $\tilde{x}_j[k]$ is also set to $\tilde{x}_j[0] = 1$. At the initialization stage, agent $v_j$ obtains $\sigma_j[0] = \sum_{v_i\in \mathcal{N}_{j}^{-}} h_{ji}$ (which is the superposition of all incoming channels) by having all agents transmit $1$. Then, as initial states for the algorithm, each agent sets $y_j[0]=\frac{\tilde{y}_j[0]}{\sigma_j[0]}$ and $x_j[0]=\frac{\tilde{x}_j[0]}{\sigma_j[0]}$.
    \item[2.] At every time step $k=0,1,2, \ldots$, each agent broadcasts $y_j[k]$ and $x_j[k]$ in two different time slots (the time slots are assumed to be within the same time step).
    \item[3.] At every time step $k=0, 1,2, \ldots$, agent $v_j$ receives the corresponding aggregated signals from the in-neighbors:
    \begin{subequations} 
    \begin{align}
        \tilde{y}_j[k+1] &=\sum_{v_i\in \mathcal{N}_{j}^{-}} h_{ji} y_i[k] , \label{eq:received signals-a} \\
        \tilde{x}_j[k+1] &=\sum_{v_i\in \mathcal{N}_{j}^{-}} h_{ji} x_i[k] . \label{eq:received signals-b}
    \end{align}
    \end{subequations}
    Then, it prepares the next signals to be transmitted by normalizing the received signals as follows:
    \begin{subequations} 
    \begin{align}
    {y}_j[k+1] &= \frac{\tilde{y}_j[k+1]}{\sigma_j[0]} , \label{eq:normalization-a} \\
    {x}_j[k+1] &=\frac{\tilde{x}_j[k+1]}{\sigma_j[0]} . \label{eq:normalization-b}
    \end{align}
    \end{subequations}
    \item[4.] The output of the algorithm is the ratio $\mu_j[k]\triangleq {y}_j[k]/{x}_j[k]$, which is shown in Theorem~\ref{theorem:1} to converge to the exact average, i.e., 
    %\begin{align*}
    $\lim_{k\to\infty}\mu_j[k] \equiv \lim_{k\to\infty} \frac{{y}_j[k]}{{x}_j[k]} = \bar{\mu}$,
    %\end{align*}
    where $\bar{\mu}=\frac{\sum_{v_i\in \mathcal{N}} S_i}{|\mathcal{N}|}$.
\end{list4}

The procedure of the TIC-Over-the-Air Ratio Consensus algorithm is summarized in Algorithm~\ref{algorithm:1}. 
% ---------------------------------------------------------------------------------
% Algorithm 1
% ---------------------------------------------------------------------------------
\begin{algorithm}[!]
\caption{TIC-Over-the-Air Ratio Consensus}
\textbf{Input:} A strongly connected digraph $\mathcal{G}(\mathcal{N}, \mathcal{E})$ with $n=|\mathcal{N}|$ nodes and $m=|\mathcal{E}|$ edges.\\[0.1cm]
\textbf{Initialization:} Each node $v_j \in \mathcal{N}$, that follows the protocol, sets its initial conditions to $\tilde{y}_j[0] = S_j$ and $\tilde{x}_j[0] =1$. Before starting the iteration each node computes its normalization values $\sigma_j[0] = \sum_{v_i\in \mathcal{N}_{j}^{-}} h_{ji}$, and its initial values to be transmitted, 
$y_j[0]=\frac{\tilde{y}_j[0]}{\sigma_j[0]}$ and $x_j[0]=\frac{\tilde{x}_j[0]}{\sigma_j[0]}$.

\vspace{0.2cm}

\noindent For $k=0,1,2, \ldots$, each node $v_j \in \mathcal{N}$ does the following: \\
\noindent 1) Obtains the aggregated signals $\tilde{y}_j[k+1]$ and $\tilde{x}_j[k+1]$ according to Eqs~\eqref{eq:received signals-a}--\eqref{eq:received signals-b}. \\
\noindent 2) Computes the next signals to be transmitted ${y}_j[k+1]$ and ${x}_j[k+1]$ according to Eqs~\eqref{eq:normalization-a}--\eqref{eq:normalization-b}.

\vspace{0.2cm}

\textbf{Output:} The ratio $\mu_j[k]$.
\label{algorithm:1}
\end{algorithm}

\begin{theorem}\label{theorem:1}
Consider a strongly connected digraph $\mathcal{G}(\mathcal{N}, \mathcal{E})$. Let $y_j[k]$ and $x_j[k]$ (for all $v_j \in \mathcal{N}$ and $k=0,1,2,\ldots$) be the result of the iterations ~\eqref{eq:received signals-a}--\eqref{eq:received signals-b} and~\eqref{eq:normalization-a}--\eqref{eq:normalization-b}, and the initial conditions be ${y_j[0] =S_j}\in \mathbb{R}$ and  $x_j[0]=1$. Then, the solution to the average consensus problem can be asymptotically obtained as
$$
\displaystyle \lim_{k\rightarrow \infty} \mu_j[k]=\frac{\sum_{v_j\in \mathcal{N}} y_j[0]}{\sum_{v_j\in \mathcal{N}} x_j[0]} {=\frac{\sum_{v_j\in \mathcal{N}} S_j}{|\mathcal{N}|}} \; , \forall v_j \in \mathcal{N} \; ,
$$
where
$
\displaystyle \mu_j[k]=\frac{y_j[k]}{x_j[k]} \; .
$   \hfill $\lrcorner$
\end{theorem}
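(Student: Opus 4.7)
The plan is to recast the iterations in matrix form and then apply the Perron--Frobenius machinery used for the classical Ratio Consensus (Proposition~\ref{lemma_christoforos}), taking care that here the underlying weight matrix is \emph{row} stochastic rather than column stochastic, with the modified initial conditions absorbing the mismatch. Stacking the node states into vectors $y[k], x[k] \in \mathbb{R}^n$ and composing the aggregation step \eqref{eq:received signals-a}--\eqref{eq:received signals-b} with the normalization step \eqref{eq:normalization-a}--\eqref{eq:normalization-b}, the update reads $y[k+1]=P\,y[k]$ and $x[k+1]=P\,x[k]$, where $P\in\mathbb{R}^{n\times n}$ has entries $P_{ji}=h_{ji}/\sigma_j[0]$ whenever $v_i\in\mathcal{N}_j^-$ and $P_{ji}=0$ otherwise. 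By the very definition of $\sigma_j[0]=\sum_{v_i\in\mathcal{N}_j^-} h_{ji}$, every row of $P$ sums to one, i.e., $P\,\mathbf{1}=\mathbf{1}$.

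The key structural step is to identify the left Perron eigenvector of $P$ in closed form. I would claim and verify that $\pi\in\mathbb{R}_+^n$ defined by $\pi_i=\sigma_i[0]$ satisfies $\pi\T P=\pi\T$. Using Assumption~\ref{assumption:3} (reciprocity, $h_{ij}=h_{ji}$) together with the implicit bidirectionality of wireless links (so that $\mathcal{N}_i^-=\mathcal{N}_i^+$ and the edge set is counted symmetrically), a direct computation gives
$$
(\pi\T P)_i = \sum_{v_j\in\mathcal{N}_i^+}\sigma_j[0]\cdot\frac{h_{ji}}{\sigma_j[0]} = \sum_{v_j\in\mathcal{N}_i^-} h_{ij} = \sigma_i[0] = \pi_i .
$$
Provided $P$ is primitive, Perron--Frobenius then yields $\lim_{k\to\infty}P^k=\mathbf{1}\,\tilde\pi\T$ with $\tilde\pi_i=\sigma_i[0]/\sum_{l}\sigma_l[0]$.

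It remains to feed the actual initial conditions $y_i[0]=S_i/\sigma_i[0]$ and $x_i[0]=1/\sigma_i[0]$ (as set in the initialisation stage of Algorithm~\ref{algorithm:1}) into this limit. The crucial cancellation is that the $\sigma_i[0]$ weights supplied by $\tilde\pi$ are exactly compensated by the $1/\sigma_i[0]$ factors in the initial conditions:
$$
\lim_{k\to\infty} y_j[k] = \tilde\pi\T y[0] = \frac{\sum_{v_i\in\mathcal{N}} S_i}{\sum_l \sigma_l[0]}, \qquad \lim_{k\to\infty} x_j[k] = \tilde\pi\T x[0] = \frac{n}{\sum_l \sigma_l[0]} .
$$
Taking the ratio, $\mu_j[k]=y_j[k]/x_j[k]\to(\sum_{v_i\in\mathcal{N}}S_i)/n$ for every $v_j\in\mathcal{N}$, which is the claimed average.

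The main obstacle I anticipate is establishing that $P$ is \emph{primitive} rather than merely irreducible: strong connectivity of $\mathcal{G}$ delivers irreducibility immediately, but aperiodicity must be argued separately, either by admitting self-loops (for instance, by letting each node add a known scaled copy of its own state into the aggregate, consistent with the full-duplex model in Assumption~\ref{assumption:4}) or by verifying that the cycle lengths of $\mathcal{G}$ have greatest common divisor one. A related delicate point is formalising the symmetry of $\mathcal{E}$ used in the left-eigenvector identity: reciprocity per se is only an equality between two scalars, so one must additionally justify that $(v_j,v_i)\in\mathcal{E}$ implies $(v_i,v_j)\in\mathcal{E}$, which is implicit in the wireless broadcast model but deserves an explicit statement before the eigenvector calculation is invoked.
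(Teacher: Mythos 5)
Your proposal is correct, and it reaches the result by the dual route to the one the paper takes. The paper tracks the \emph{received} variables $\tilde y[k]$, whose one-step map is $\bar H = H\Sigma$; reciprocity is used there to show that $\bar H$ is \emph{column} stochastic, so $\mathbf{1}\T\tilde y[k]$ is conserved and the standard ratio-consensus limit ($\bar H^{k}\to \mathbf{v}\mathbf{1}\T$ with $\mathbf{v}$ the right Perron eigenvector, which never needs to be computed) delivers the claim in one line. You instead track the \emph{transmitted} variables $y[k]$, whose map is $P=\Sigma H$ --- row stochastic by construction of $\sigma_j[0]$, with no reciprocity needed for that --- and you spend reciprocity on identifying the left Perron eigenvector $\pi_i=\sigma_i[0]$ in closed form, after which the $1/\sigma_i[0]$ factors in the initialization cancel against $\pi$. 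Since $\Sigma H=\Sigma(H\Sigma)\Sigma^{-1}$, the two matrices are similar and the arguments are mirror images; yours buys a strictly stronger conclusion (the numerator and denominator sequences $y_j[k]$ and $x_j[k]$ converge individually, with explicit limits, rather than only their ratio) and makes transparent \emph{why} the initialization must divide by $\sigma_j[0]$, at the cost of being slightly longer. The two caveats you flag are genuine but apply equally to the paper's own proof: the paper also needs primitivity (not just irreducibility) of $\bar H$ for $\lim_k\bar H^{k}$ to exist, and it also silently uses that reciprocity forces $\mathcal{E}$ to be symmetric so that the row sums of $H$ coincide with its column sums; neither point is addressed there, so raising them is a service rather than a gap in your argument.
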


\begin{proof}
Eq.~\eqref{eq:received signals-a} in matrix form is
%\begin{align}
    $\tilde{y}[k+1] = Hy[k]$,
%\end{align}
where
\begin{align*}
\tilde{y}[k]&=\begin{bmatrix}
\tilde{y}_1[k] & \ldots & \tilde{y}_{n}[k] 
\end{bmatrix}^T, \\ 
y[k]&=\begin{bmatrix}
y_1[k] & \ldots & y_{n}[k] 
\end{bmatrix}^T, \\
H &=\begin{bmatrix}
h_{11} & h_{12} & \cdots & h_{1n} \\
h_{21} & h_{22} & \cdots & h_{2n} \\
\vdots & \vdots & \ddots & \vdots \\
h_{n1} & h_{n2} & \cdots & h_{nn}
\end{bmatrix}.    
\end{align*}
Eq.~\eqref{eq:normalization-a} in matrix form is 
%\begin{align}
$y[k+1] = \Sigma\tilde{y}[k+1]$,
%\end{align}
where %$\tilde{y}[k]=\begin{bmatrix}
%\tilde{y}_1[k] & \ldots & \tilde{y}_{n}[k] 
%\end{bmatrix}^T
%$ and
\begin{align*}
\Sigma=\begin{bmatrix}
\sigma^{-1}_{1}[0] & 0 & \cdots & 0 \\
0 & \sigma^{-1}_{2}[0] & \cdots & 0 \\
\vdots & \vdots & \ddots & \vdots \\
0 & 0 & \cdots & \sigma^{-1}_{n}[0]
\end{bmatrix}.    
\end{align*}
Therefore, 
%\begin{align*}
$\tilde{y}[k+1]=H\Sigma\tilde{y}[k] =(H\Sigma)^{k+1}\tilde{y}[0]=\bar{H}^{k+1} \tilde{y}[0]$.
%\end{align*}
Due to the reciprocity principle, the sum of rows of matrix $H$ is equal to the sum of its columns, and as a consequence, $\bar{H}\triangleq H\Sigma$ is a column stochastic matrix. Similarly, for the auxiliary variable $x[k]$. Therefore, for each node $v_j\in\mathcal{N}$,
\begin{align*}
\lim_{k\to\infty}\mu_j[k]&=\lim_{k\to\infty}\frac{y_j[k]}{x_j[k]}=\lim_{k\to\infty}\frac{\tilde{y}_j[k]}{\tilde{x}_j[k]}=\lim_{k\to\infty}\frac{\bar{H}^{k}(j,:)\tilde{y}[0]}{\bar{H}^{k}(j,:)\tilde{x}[0]} \\ 
&=\frac{\mathbf{v}(j)\mathbf{1}^T \tilde{y}[0]}{\mathbf{v}(j)\mathbf{1}^T \tilde{x}[0]} =\frac{\sum_{v_j\in \mathcal{N}} y_j[0]}{|\mathcal{N}|} =  \frac{\sum_{v_j\in \mathcal{N}} S_j}{|\mathcal{N}|},
\end{align*}
where $\mathbf{v}(j)$ is the j$^{\mathrm{th}}$ entry of the right eigenvector of $\bar{H}$ that corresponds to eigenvalue $1$ (i.e., $\bar{H}\mathbf{v}=\mathbf{v}$). Note that $\mathbf{v}$ or $\mathbf{v}(j)$ need not be known to node $v_j$. Also, note that $\mathbf{v}$ is unique if the underlying graph is strongly connected.
%
%The proof is complete. \hfill $\lrcorner$
\end{proof}

\subsection{Time-varying channels}

%The main operations of the Over-the-Air Ratio Consensus algorithm in the case of time-invariant channels, herein called the TVC-Over-the-Air Ratio Consensus algorithm, are summed up in the following steps:
For the case with time-varying channel, the main operation steps of the algorithm (herein called the TVC-Over-the-Air Ratio Consensus) are described as follows. 
%\ZC{Is this step 1 exactly the same as the step 1 in the previous case? If so, I think we can avoid repeating it here.}
\begin{list4}
    \item[1.] The initialization step is the same as in the previous case with time-invariant channels.
    %Each agent $v_j$ maintains an initial state $\tilde{y}_j[0]=S_j$ (determined by the information on which the network of nodes must reach average consensus).  The initial value of the auxiliary variable $\tilde{x}_j[k]$ is also set to $\tilde{x}_j[0] = 1$. At the initialization stage, agent $v_j$ obtains $\sigma_j[0] = \sum_{v_i\in \mathcal{N}_{j}^{-}} h_{ji}[0]$ ((which is the superposition of all incoming channels) by having all agents transmit $1$. Then, as initial states for the algorithm, each agent sets $y_j[0]=\frac{\tilde{y}_j[0]}{\sigma_j[0]}$ and $x_j[0]=\frac{\tilde{x}_j[0]}{\sigma_j[0]}$.
    \item[2.] When the algorithm starts, at time step $k=0,1,2,\ldots$, each agent broadcasts $y_j[t]$, $x_j[t]$, and $w_j[t]=1$ in three different time slots (the time slots are assumed to be within the same time step).
    \item[3.] At every time step $k=0,1,2, \ldots$, agent $v_j$ receives the corresponding aggregated signals from the in-neighbors:
    \begin{subequations} 
    \begin{align}%\label{eq:received signals-tv}
        \tilde{y}_j[k+1] &=\sum_{v_i\in \mathcal{N}_{j}^{-}} h_{ji}[k] y_i[k]\; , \label{eq:received signals-a-tv} \\
        \tilde{x}_j[k+1] &=\sum_{v_i\in \mathcal{N}_{j}^{-}} h_{ji}[k] x_i[k]\; , \label{eq:received signals-b-tv} \\
        \sigma_j[k] &=\sum_{v_i\in \mathcal{N}_{j}^{-}} h_{ji}[k] w_i[k] \; , \label{eq:received signals-c} 
    \end{align}
    \end{subequations}
    where $w_i[k]=1$.
    Then, it prepares the next signals to be transmitted by normalizing the received signals as follows:
    \begin{subequations} 
    \begin{align}
    {y}_j[k+1] &= \frac{\tilde{y}_j[k+1]}{\sigma_j[k]} , \label{eq:normalization-a-tv} \\
    {x}_j[k+1] &=\frac{\tilde{x}_j[k+1]}{\sigma_j[k]} .  \label{eq:normalization-b-tv}
    \end{align}
    \end{subequations}
    \item[4.] The output of the algorithm is the ratio $\mu_j[k]\triangleq {y}_j[k]/{x}_j[k]$, which is shown in Theorem~\ref{theorem:2} to converge to the exact average, i.e., 
    \begin{align*}
    \lim_{k\to\infty}\mu_j[k] \equiv \lim_{k\to\infty} \frac{{y}_j[k]}{{x}_j[k]} = \bar{\mu},
    \end{align*}
    where $\bar{\mu}=\frac{\sum_{v_i\in \mathcal{N}} S_i}{|\mathcal{N}|}$.
\end{list4}

% \begin{remark}
% Note that in the general case, we can assume that some communication channels exhibit deep fading, and for some $k_0\in\mathbb{N}$, the values of the channel coefficient $h_{ij}[k_0]=0$. In that case, to be able to establish that the average is reached, we need to assume that there exists $\ell\in\mathbb{N}$ such that in every $\ell$ steps the graph is jointly strongly connected~\cite{1963:Wolfowitz}. 
% \end{remark}

The procedure of the TVC-Over-the-Air Ratio Consensus algorithm is summarized in Algorithm~\ref{algorithm:2}.
% ---------------------------------------------------------------------------------
% Algorithm 2
% ---------------------------------------------------------------------------------
\begin{algorithm}[!]
\caption{TVC-Over-the-Air Ratio Consensus}
\textcolor{black}{\textbf{Input:} Consider a setting with $\mathcal{N} = \{ v_1, \cdots, v_n \}$ nodes with time-varying communication channels, and assume that $(\epsilon, B)$-connectivity holds.}

%\textcolor{red}{REMOVE: \textbf{Input:} A \TC{sequence of digraphs}  $\mathcal{G}(\mathcal{N}, \mathcal{E}[k])$ \TC{satisfying $B-$connectivity} with $n=|\mathcal{N}|$ nodes and $m[k]=|\mathcal{E}[k]|$ edges\TC{, in which channels are time-varying}.}\\[0.1cm]
\textbf{Initialization:} Each node $v_j \in \mathcal{N}$, that follows the protocol, sets its initial conditions to $\tilde{y}_j[0] = S_j$ and $\tilde{x}_j[0] =1$. At each step, each node computes its normalization values $\sigma_j[k] = \sum_{v_i\in \mathcal{N}_{j}^{-}} h_{ji}[k]$, and its values to be transmitted are updated as follows:
%(which is the superposition of all incoming channels, obtained by having all in-neighbors transmit $1$), 
$y_j[k]=\frac{\tilde{y}_j[k]}{\sigma_j[k]}$ and $x_j[k]=\frac{\tilde{x}_j[k]}{\sigma_j[k]}$.

\vspace{0.2cm}

\noindent For $k=0,1,2, \ldots$, each node $v_j \in \mathcal{N}$ does the following: \\
\noindent 1) Obtains the aggregated signals $\tilde{y}_j[k+1]$ and $\tilde{x}_j[k+1]$ according to Eqs~\eqref{eq:received signals-a-tv}--\eqref{eq:received signals-b-tv}. \\
\noindent 2) Computes the next signals to be transmitted ${y}_j[k+1]$ and ${x}_j[k+1]$ according to Eqs~\eqref{eq:normalization-a-tv}--\eqref{eq:normalization-b-tv}.

\vspace{0.2cm}

\textbf{Output:} The ratio $\mu_j[k]$.
\label{algorithm:2}
\end{algorithm}

\textcolor{black}{
$(\epsilon, B)$-Connectivity: Consider a setting with $\mathcal{N} = \{ v_1, \cdots, v_n \}$ nodes with time-varying communication channels among them. At step $k$, any two nodes, say $v_i$ and $v_j$, share a communication channel with time-varying coefficients $h_{ij}[k] = h_{ji}[k]$. For $\epsilon > 0$, let $\mathcal{E}[k] = \{ (v_j, v_i) \: | \; h_{ji}[k] > \epsilon \}$. Considering the sequence of digraphs $\mathcal{G}[k] = (\mathcal{N}, \mathcal{E}[k])$, we require that there exists integer $B$ such that, for all $k \in \mathbb{N}$, the joint digraph $\mathcal{G}[kk] \cup \mathcal{G}[kB+1] \cup \cdots \cup \mathcal{G}[kB+B-1]$ is strongly connected. This is an adaptation of the so-called $B$-connectivity condition (see, e.g., [8], [13]) with dependence on the parameter $\epsilon$.
}

\begin{theorem}\label{theorem:2}
\textcolor{black}{Consider a setting with $\mathcal{N} = \{ v_1, \cdots, v_n \}$ nodes with time-varying communication channels, and assume that $(\epsilon, B)$-connectivity holds.}
%\textcolor{red}{REMOVE: Consider a \TC{sequence of digraphs $\mathcal{G}(\mathcal{N}, \mathcal{E}[k])$, such that for all $k\in \mathbb{N}$ there exists an integer $B$ such that the joint graph $\mathcal{G}[k] \cup \mathcal{G}[k+1] \cup \cdots \cup \mathcal{G}[k+B]$ is a} strongly connected digraph \TC{(the so-called $B-$connectivity~\cite{2014:robustifiedRC,yang2024distributed})}, in which the channels are time-varying.} 
Let $y_j[k]$ and $x_j[k]$ (for all $v_j \in \mathcal{N}$ and $k\in\mathbb{N}$) be the result of the iterations ~\eqref{eq:received signals-a-tv}--\eqref{eq:received signals-b-tv} and~\eqref{eq:normalization-a-tv}--\eqref{eq:normalization-b-tv}, and the initial conditions are ${y_j[0] =S_j}\in \mathbb{R}$ and  $x_j[0]=1$. Then, the solution to the average consensus problem can be asymptotically obtained as
$$
\displaystyle \lim_{k\rightarrow \infty} \mu_j[k]=\frac{\sum_{v_j\in \mathcal{N}} y_j[0]}{\sum_{v_j\in \mathcal{N}} x_j[0]} {=\frac{\sum_{v_j\in \mathcal{N}} S_j}{|\mathcal{N}|}} \; , \forall v_j \in \mathcal{N} \; ,
$$
where
$
\displaystyle \mu_j[k]={y_j[k]}/{x_j[k]} \; .
$   
\hfill $\lrcorner$
\end{theorem}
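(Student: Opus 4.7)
The plan is to reduce the time-varying case to a product of column-stochastic matrices, exactly analogous to the argument for Theorem~\ref{theorem:1}. First, I would rewrite the iterations in matrix form on the \emph{un-normalized} aggregates $\tilde{y}[k]$ and $\tilde{x}[k]$. Since $y_j[k]=\tilde{y}_j[k]/\sigma_j[k]$ and $\tilde{y}_j[k+1]=\sum_{v_i}h_{ji}[k]y_i[k]$, we obtain
\[
\tilde{y}[k+1] \;=\; H[k]\Sigma[k]\,\tilde{y}[k] \;\triangleq\; \bar{H}[k]\,\tilde{y}[k],
\]
and identically $\tilde{x}[k+1]=\bar{H}[k]\tilde{x}[k]$. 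Invoking Assumption~\ref{assumption:3} (channel reciprocity), the sum of column $i$ of $\bar{H}[k]$ is $\sum_j h_{ji}[k]/\sigma_i[k]=\sum_j h_{ij}[k]/\sigma_i[k]=1$, so $\bar{H}[k]$ is column-stochastic for every $k$. Since the normalization factor $\sigma_j[k]$ cancels in the ratio, $\mu_j[k]=y_j[k]/x_j[k]=\tilde{y}_j[k]/\tilde{x}_j[k]$, so it suffices to study the limit of this latter ratio.

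Next, I would verify that $(\epsilon,B)$-connectivity together with Assumption~\ref{assumption:1} places the sequence $\{\bar{H}[k]\}$ within the class of time-varying column-stochastic matrices for which a rank-one backward-product limit is known. The definition of $\mathcal{E}[k]$ ensures that every nonzero $h_{ji}[k]$ (and hence every nonzero entry of $\bar{H}[k]$ through the edges of $\mathcal{G}[k]$) is bounded below by a strictly positive constant depending on $\epsilon$, $B$, and $n$. Combined with positive self-weights (inherent in the over-the-air model once self-contributions are accounted for) and the hypothesis that the union digraph $\mathcal{G}[kB]\cup\cdots\cup\mathcal{G}[kB+B-1]$ is strongly connected for every $k$, this is exactly the setting of standard push-sum/ratio-consensus convergence theorems for time-varying column-stochastic weights (e.g., the results used for Proposition~\ref{lemma_christoforos} extended to the TVC case).

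Applying such a result, the backward product $\Phi[k,0]\triangleq\bar{H}[k]\bar{H}[k-1]\cdots\bar{H}[0]$ converges to a rank-one matrix $\pi[k]\mathbf{1}^\T$ with $\pi[k]>0$ componentwise, and all rows converge at a uniform geometric rate. Hence for each $v_j\in\mathcal{N}$,
\[
\lim_{k\to\infty}\mu_j[k]=\lim_{k\to\infty}\frac{\Phi[k,0](j,:)\,\tilde{y}[0]}{\Phi[k,0](j,:)\,\tilde{x}[0]} = \frac{\pi_j[k]\,\mathbf{1}^\T \tilde{y}[0]}{\pi_j[k]\,\mathbf{1}^\T \tilde{x}[0]} = \frac{\sum_{v_i\in\mathcal{N}}S_i}{|\mathcal{N}|},
\]
which is the desired average, and the $\pi_j[k]$ cancels so that no node needs to know it.

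The main obstacle is the matrix-product step: in contrast with the TIC case, where primitivity of a single column-stochastic matrix on a strongly connected graph suffices, here I must translate the $(\epsilon,B)$-connectivity property of the communication digraphs $\mathcal{G}[k]$ into (i) uniform positivity of the nonzero entries of $\bar{H}[k]$ and (ii) joint strong connectivity of the zero-pattern of products over each length-$B$ window, without the $\Sigma[k]$ rescaling shrinking some entries below any fixed floor. Once these two technical conditions are checked, the convergence of $\Phi[k,0]$ to a rank-one limit can be imported directly from established results on time-varying weighted averaging, completing the proof.
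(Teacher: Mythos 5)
Your proposal follows essentially the same route as the paper's own proof: rewrite the dynamics on the un-normalized aggregates as $\tilde{y}[k+1]=H[k]\Sigma[k]\tilde{y}[k]$, use reciprocity to show each $\bar{H}[k]=H[k]\Sigma[k]$ is column stochastic, and then appeal to a Wolfowitz-type result on backward products of column-stochastic matrices under the $(\epsilon,B)$-connectivity condition to get a rank-one limit whose left factor cancels in the ratio. If anything, you are more explicit than the paper (which defers the details to an extended version) about the two technical conditions — uniform positivity of the nonzero entries of $\bar{H}[k]$ and joint strong connectivity over length-$B$ windows — that must be verified before the product-convergence theorem can be invoked.
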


\begin{proof}
%The proof follows similar steps as in the time-invariant channel case. Due to space limit, the detailed proof will be provided in an extended version of this paper.
Eq.~\eqref{eq:received signals-a-tv} in matrix form is
%\begin{align}
$\tilde{y}[k+1] = H[k]y[k]$,
%\end{align}
where 
\begin{align*}
H[k]=\begin{bmatrix}
h_{11}[k] & h_{12}[k] & \cdots & h_{1n}[k] \\
h_{21}[k] & h_{22}[k] & \cdots & h_{2n}[k] \\
\vdots & \vdots & \ddots & \vdots \\
h_{n1}[k] & h_{n2}[k] & \cdots & h_{nn}[k]
\end{bmatrix}.
\end{align*}
\begin{remark}
For the link $\varepsilon_{lj}\notin \mathcal{E}$, $h_{jl}=h_{lj}<\epsilon$. This accounts for the case for which some channels are in deep fading.  
\end{remark}

\noindent Eq.~\eqref{eq:normalization-a-tv} in matrix form is 
%\begin{align}
$y[k+1] = \Sigma[k+1]\tilde{y}[k+1]$,
%\end{align}
where 
\begin{align*}
\Sigma[k]=\begin{bmatrix}
\sigma^{-1}_{1}[k] & 0 & \cdots & 0 \\
0 & \sigma^{-1}_{2}[k] & \cdots & 0 \\
\vdots & \vdots & \ddots & \vdots \\
0 & 0 & \cdots & \sigma^{-1}_{n}[k]
\end{bmatrix}.    
\end{align*}
Therefore, 
\begin{align*}
\tilde{y}[k+1]&=H[k]\Sigma[k]\tilde{y}[k] =
\underbrace{H[k]\Sigma[k]\ldots H[0]\Sigma[0]}_{\triangleq B_{k:0}} \tilde{y}[0]
\end{align*}
Due to the reciprocity principle, the sum of rows of matrix $H[k]$ is equal to the sum of its columns, and as a consequence, $\bar{H}[k]\triangleq H[k]\Sigma[k]$ is a column stochastic matrix. Similarly, for the auxiliary variable $x[k]$. Therefore, for each node $v_j\in\mathcal{N}$,
\begin{align*}
\lim_{k\to\infty}\mu_j[k]&=\lim_{k\to\infty}\frac{y_j[k]}{x_j[k]}=\lim_{k\to\infty}\frac{B_{k:0}(j,:)\tilde{y}[0]}{B_{k:0}(j,:)\tilde{x}[0]} \\ 
\stackrel{(a)}{=}&\frac{\mathbf{v}_{B_{k:0}}(j)\mathbf{1}^T \tilde{y}[0]}{\mathbf{v}_{B_{k:0}}(j)\mathbf{1}^T \tilde{x}[0]} =\frac{\sum_{v_j\in \mathcal{N}} y_j[0]}{|\mathcal{N}|} =  \frac{\sum_{v_j\in \mathcal{N}} S_j}{|\mathcal{N}|},
\end{align*}
where $\mathbf{v}_{B_{k:0}}(j)$ is the j$^{\mathrm{th}}$ entry of the right eigenvector of $B_{k:0}$ that corresponds to eigenvalue $1$ (i.e., $B_{k:0}\mathbf{v}_{B_{k:0}}=\mathbf{v}_{B_{k:0}}$). Note that $\mathbf{v}_{B_{k:0}}$ or $\mathbf{v}_{B_{k:0}}(j)$ need not be known to node $v_j$. Also, note that $\mathbf{v}_{B_{k:0}}$ is unique if the underlying graph of the matrices comprising $B_{k:0}$ is jointly strongly connected; for the derivation of equality $(a)$ we make similar arguments to those of the theorem by Wolfowitz \cite{1963:Wolfowitz}.
% %
%The proof is complete. 
The detailed proof will be provided in an extended version of this paper.
\end{proof}

% ==================================================
%
%
% NUMERICAL EXAMPLES
%
%
% ==================================================
\section{Numerical Examples}\label{sec:examples}

We consider a network of $n=10$ agents, with randomly selected initial values that have an average equal to $1$. 

%\subsection{Time-invariant channels}

First, we consider the case for which the channels are time-invariant. The multi-agent system reaches average consensus, as shown in Fig.~\ref{example1}.

\begin{figure}[ht!]
\centering
\includegraphics[width=0.95\columnwidth]{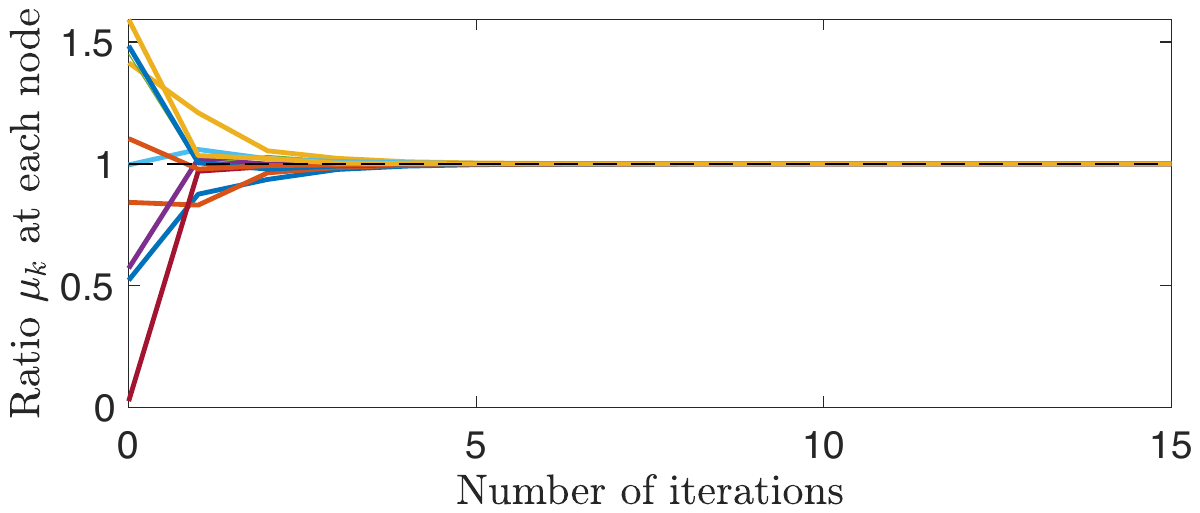}
\caption{Time-invariant channel Over-the-Air Ratio Consensus for a graph of $10$ nodes\vspace{-0.3cm}.}
\label{example1}
\end{figure}

%\begin{remark}
%    Note that Ratio Consensus is a very fast algorithm for reaching the average. In the case that the graph is fully connected, the average is reached in a few steps only (whereas other average consensus algorithms would need a lot more steps). 
%\end{remark}

%\subsection{Time-varying channels}
%In this subsection
Next, we consider time-varying channels, which are, nevertheless, coherent in every time step. The multi-agent system reaches average consensus, as shown in Fig.~\ref{example2} (bottom), even though the state $y[k]$ (Fig.~\ref{example2}, top)  and auxiliary variable $x[k]$ (Fig.~\ref{example2}, middle) do not.
\begin{figure}[ht!]
\centering
\vspace{-0.13cm}
\includegraphics[width=0.95\columnwidth]{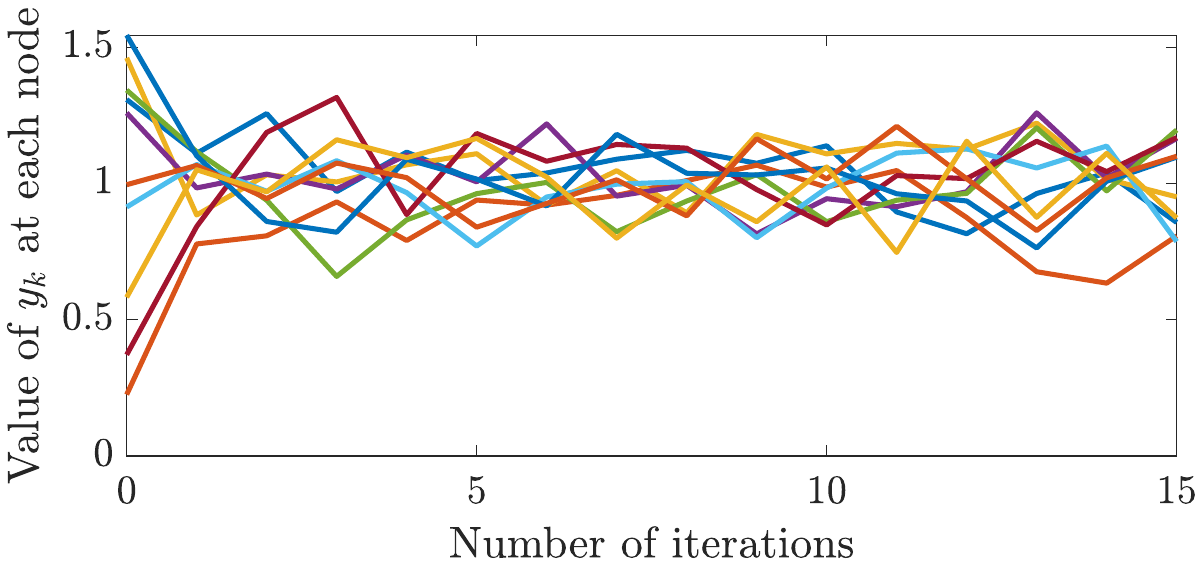}
\vspace{-0.03cm}
\includegraphics[width=0.95\columnwidth]{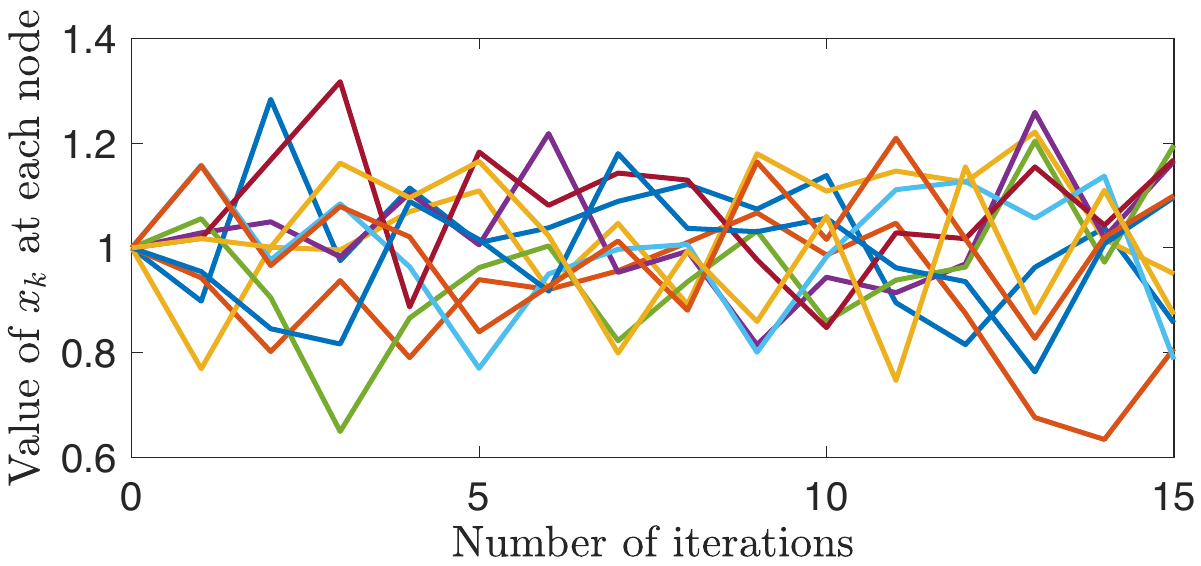}
\vspace{-0.03cm}
\includegraphics[width=0.95\columnwidth]{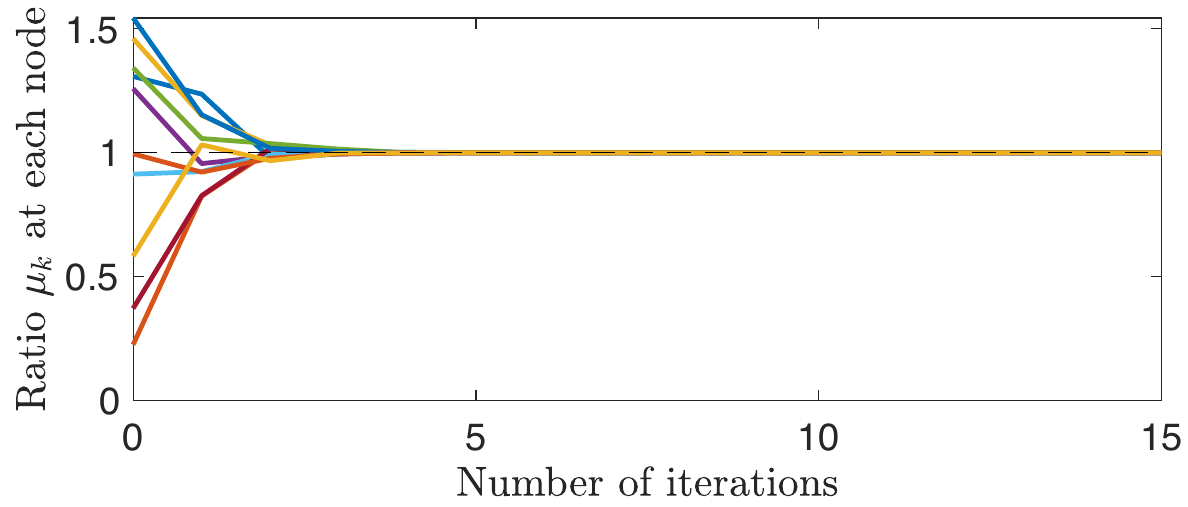}
\caption{Time-varying channel Over-the-Air Ratio Consensus for a graph of $10$ nodes. The individual iterations do not converge, but the ratio does.}
\label{example2}
\end{figure}
% ==================================================
%
%
% CONCLUSIONS AND FUTURE DIRECTIONS
%
%
% ==================================================
\section{Conclusions and Future Directions}\label{sec:conclusions}

%\subsection{Conclusions}

We considered the average consensus problem in wireless multi-agent systems. 
A distributed protocol employing over-the-air aggregation 
%was proposed to achieve fast and efficient information exchange and fusion over directed graphs. The proposed protocol 
was built upon a modified version of Ratio Consensus, in which an extra normalization step is needed to account for the arbitrary values of the channel coefficients. Numerical simulations corroborate the validity of our theoretical results.
%
%\subsection{Future Directions}
%Our proposed algorithm relies on the assumptions of real-valued channel coefficients and negligible noise. In practice, wireless channels are typically modeled as complex-valued random numbers. A main challenge in applying over-the-air aggregation for distributed consensus is the phase alignment between multiple senders and multiple receivers, which is practically infeasible with one-shot communication.
%

Part of our ongoing work is to investigate how our method can be modified to account for channel noise and non-coherent transmissions, without requiring any knowledge about the noise statistics and without having a diminishing step size.

% ------------------------------------------------------------------------------
% Bibliography
% ------------------------------------------------------------------------------
\bibliographystyle{IEEEtran}
\bibliography{references}
% ==================================================
%
%
% END DOCUMENT
%
%
% ==================================================
\end{document}